\pdfoutput=1
\RequirePackage{ifpdf}
\ifpdf 
\documentclass[pdftex]{sigma}
\else
\documentclass{sigma}
\fi

\numberwithin{equation}{section}

\newtheorem{Theorem}{Theorem}[section]
\newtheorem{Corollary}[Theorem]{Corollary}
\newtheorem{Lemma}[Theorem]{Lemma}
\newtheorem{Proposition}[Theorem]{Proposition}
 { \theoremstyle{definition}
\newtheorem{Note}[Theorem]{Note}
\newtheorem{Example}[Theorem]{Example} }

\newcommand{\pFq}[5]{\ensuremath{{}_{#1}F_{#2} \left( \genfrac{}{}{0pt}{}{#3}
{#4} \bigg| {#5} \right)}}

\begin{document}

\allowdisplaybreaks

\newcommand{\arXivNumber}{1505.06274}

\renewcommand{\PaperNumber}{001}

\FirstPageHeading

\ShortArticleName{The Moments of the Hydrogen Atom by the Method of Brackets}

\ArticleName{The Moments of the Hydrogen Atom\\ by the Method of Brackets}

\Author{Ivan GONZALEZ~$^{\dag^1}$, Karen T.~KOHL~$^{\dag^2}$, Igor KONDRASHUK~$^{\dag^3}$, Victor H.~MOLL~$^{\dag^4}$\\ and Daniel SALINAS~$^{\dag^5}$}

\AuthorNameForHeading{I.~Gonzalez, K.T.~Kohl, I.~Kondrashuk, V.H.~Moll and D.~Salinas}

\Address{$^{\dag^1}$~Instituto de F\'{i}sica y Astronomia, Universidad de Valparaiso,\\
\hphantom{$^{\dag^1}$}~Avda.\ Gran Breta\~{n}a 1111, Valparaiso, Chile}
\EmailDD{\href{mailto:ivan.gonzalez@uv.cl}{ivan.gonzalez@uv.cl}}

\Address{$^{\dag^2}$~Department of Mathematics, University of Southern Mississippi,\\
\hphantom{$^{\dag^2}$}~Long Beach, MS 39560, USA}
\EmailDD{\href{mailto:karen.kohl@usm.edu}{karen.kohl@usm.edu}}

\Address{$^{\dag^3}$~Grupo de Matem\'atica Aplicada {\rm \&} Grupo de F\'isica de Altas Energ\'ias, \\
\hphantom{$^{\dag^3}$}~Departmento de Ciencias B\'{a}sicas, Universidad del B\'{i}o-B\'{i}o, Campus Fernando May,\\
\hphantom{$^{\dag^3}$}~Av.~Andres Bello 720, Casilla 447, Chill\'{a}n, Chile}
\EmailDD{\href{mailto:igor.kondrashuk@gmail.com}{igor.kondrashuk@gmail.com}}

\Address{$^{\dag^4}$~Department of Mathematics, Tulane University, New Orleans, LA 70118, USA}
\EmailDD{\href{mailto:vhm@tulane.edu}{vhm@tulane.edu}}
\URLaddressDD{\url{http://129.81.170.14/~vhm/}}

\Address{$^{\dag^5}$~Departamento de Fisica, Universidad T\'{e}cnica {F}ederico {S}anta {M}ar\'{i}a,\\
\hphantom{$^{\dag^5}$}~Casilla 110-V, Valparaiso, Chile}
\EmailDD{\href{mailto:salinas.a.daniel@gmail.com}{salinas.a.daniel@gmail.com}}

\ArticleDates{Received November 23, 2016, in f\/inal form December 30, 2016; Published online January 05, 2017}

\Abstract{Expectation values of powers of the radial coordinate in arbitrary hydrogen states are given, in the quantum case, by an integral
involving the associated Laguerre function. The method of brackets is used to evaluate the integral in
closed-form and to produce an expression for this average value as a f\/inite sum.}

\Keywords{non-relativistic hydrogen atom; method of brackets; hypergeometric function; associated Laguerre functions}

\Classification{33C45; 33C20; 81V45}

\section{Introduction}\label{sec-intro}

The computation of the expectation $\langle r^{k} \rangle$ of the electron for atoms with a~single electron is a~standard problem in quantum mechanics, see \cite{pasternak-1937a,vanvleck-1934a}. For a given energy state $n$, the problem is expressed as
\begin{gather*}
\big\langle r^{k} \big\rangle = \int_{0}^{\infty} R_{n \ell}^{2}(r) r^{k+2} dr,
\end{gather*}
where $R_{n \ell}(r)$ is the radial solution of the Schr\"{o}dinger equation for the hydrogen atom. Conditions on the parameters $n$, $\ell$, $k$ are determined by the convergence of this integral.

In the non-relativistic situation, the solution is given in terms of the Hahn polynomials \cite{andrews-1999a}:
\begin{gather}
h_{m}^{(\alpha, \beta)}(x,N) = \frac{(1-N)_{m} (\beta+1)_{m}}{m!} \, \pFq32{-m, \,\alpha + \beta + m + 1,\, -x}{\beta + 1,\, 1 - N}{\,1}.
\label{hahn-1}
\end{gather}
In particular, these expectations are given in terms of the Chebyshev polynomials of discrete variables \cite{nikiforov-1991a, suslov-2010a}
 \begin{gather*}
 t_{m}(x,N) = h_{m}^{(0,0)}(x,N)
 \end{gather*}
 in the form
 \begin{gather}\label{form-f32a1}
\big\langle r^{k} \big\rangle = \frac{1}{2n} (2 \mu)^{-k} t_{k+1}(n- \ell -1, - 2 \ell -1), \qquad \text{when} \quad k = -1, 0, 1, 2, \dots,
\end{gather}
and
\begin{gather}\label{form-f32b}
\big\langle r^{k} \big\rangle = \frac{1}{2n} (2 \mu)^{-k} t_{-k-2}(n- \ell -1, - 2 \ell -1), \qquad \text{when} \quad k = -2, -3, \dots, -2 \ell-2.
\end{gather}
The parameters are $\mu = Z/na_{0}$ with $a_{0} = \hbar^{2}/me^{2}$ the Bohr radius and $Z$ is the nuclear charge. The constants $m$ and $e$ are the mass and
charge of the electron, respectively.

The identity
\begin{gather*}
t_{k}(n - \ell -1, -2 \ell -1) =\frac{\Gamma(2 \ell + k + 2)}{\Gamma(2 \ell + 2)}\,
\pFq32{-k, \, k+1, \, -n+ \ell + 1 }{1, \, 2 \ell + 2}{\,1}
\end{gather*}
follows from \eqref{hahn-1}. Then \eqref{form-f32a1} becomes
\begin{gather*}
\big\langle r^{k}\big \rangle_{n \ell} = \frac{1}{2n (2 \mu)^{k}} \frac{\Gamma(2 \ell+k+3)}{\Gamma(2 \ell+2)}\,
\pFq32{-1-k, \, k+2, \, -n+\ell+1}{1, \,2 \ell+2}{\,1}
\end{gather*}
for $k = -1, 0, 1, 2, \dots$, and \eqref{form-f32b}
\begin{gather*}
\big\langle r^{k} \big\rangle_{n \ell} = \frac{1}{2n(2 \mu)^{k}}\frac{\Gamma(2 \ell -k )}{\Gamma(2 \ell + 2)}\,
\pFq32{-2+k, \, -k+3, \, -n+\ell+1}{1, \, 2 \ell+2}{\,1}
\end{gather*}
for $k = -2, -3, \dots, -2 \ell - 2$, where the dependence upon the parameters $n$ and $\ell$ have been made explicit.

In the quantum case, the radial component of the wave function for a hydrogen atom with nuclear charge $Z$ is characterized by two quantum numbers: $n$ the
\textit{principal quantum number} and $\ell$ the \textit{orbital number}. The corresponding normalized radial function is
\begin{gather*}
R_{n \ell}(r) = A_{n \ell} ( 2 \mu r)^{\ell} \exp (- \mu r) L_{n- \ell - 1}^{2 \ell+1}( 2 \mu r),
\end{gather*}
where the normalization constant is
\begin{gather*}
A_{n \ell} = \sqrt{ \frac{(2 \mu)^{3}}{2n} \frac{(n- \ell -1)!}{(n+ \ell)!}}
\end{gather*}
and
\begin{gather*}
L_{m}^{\alpha}(x) = \frac{\Gamma( \alpha + m +1)}{\Gamma(m+1) \Gamma(1+ \alpha)} \, \pFq11{-m}{1+\alpha}{\,x}
\end{gather*}
is the associated Laguerre function; see \cite[formula 8.972.1]{gradshteyn-2015a}. The expectation value of a power of the radial coordinate is given by
\begin{gather}
\big \langle r^{k} \big \rangle_{n \ell} = ( 2 \mu)^{2 \ell} A_{n \ell}^{2} \int_{0}^{\infty} r^{2 + 2 \ell +k}
e^{-2 \mu r} \big[ L_{n- \ell -1}^{2 \ell +1}(2 \mu r) \big]^{2} dr,
\label{mean-1}
\end{gather}
with $n \in \mathbb{N}$, $0 \leq \ell \leq n-1$, and $k \in \mathbb{Z}$.

The goal of the work is to compute the integral in \eqref{mean-1} by the \textit{method of brackets}, to illustrate its f\/lexibility. The reader will f\/ind in \cite{amdeberhan-2012b,gonzalez-2014a,gonzalez-2010a,gonzalez-2010b,gonzalez-2007a,gonzalez-2008a,kohl-2011b} a~collection of examples of def\/inite integrals evaluated by this method. The basic procedure is described in Section~\ref{sec-method}.

The examples presented here are to be considered as the beginning of a series of calculations of integrals related to the hydrogen atom. These include the evaluation of the integral~\cite{gonzalez-2009a}
\begin{gather*}
J_{nms}^{\alpha \beta} = \int_{0}^{\infty} e^{-x}x^{\alpha +s} L_{n}^{\alpha}(x) L_{m}^{\beta}(x) dx
\end{gather*}
given by S.K.~Suslov and B.~Trey \cite{suslov-2008a}. The method of brackets provides an alternative method of proof that \textit{only uses} the hypergeometric representation of the Laguerre function. The method can also be used to discuss the relativistic situation. Details will appear elsewhere.

The reductions of the formulas discussed here uses basic properties of the gamma function, such as
\begin{gather}\label{gamma-1}
\Gamma(a+n) = \Gamma(a) (a)_{n} \qquad \text{and}\qquad (a)_{-n} = \frac{(-1)^{n}}{(1-a)_{n}}\qquad \text{for} \quad a \in \mathbb{R},
\quad n \in \mathbb{N}.
\end{gather}
Here $(a)_{n} = a(a+1) \cdots (a+n-1)$ is the Pochhammer symbol.

\section{A direct evaluation}\label{sec-direct}

This section presents a direct evaluation of the integral
\begin{gather}\label{mean-2}
\big \langle r^{k} \big \rangle_{n \ell} = ( 2 \mu)^{2 \ell} A_{n \ell}^{2} \int_{0}^{\infty} r^{2 + 2 \ell +k}
e^{-2 \mu r} \big[ L_{n- \ell -1}^{2 \ell +1}(2 \mu r) \big]^{2} dr
\end{gather}
given in \eqref{mean-1}. The proof is based on some identities for the associated Laguerre function appearing in the integrand. The methods presented here are then compared with the evaluation by the \textit{method of brackets} explained in the next section.

The f\/irst identity used to modify the integrand appears in \cite[formula 8.976.3]{gradshteyn-2015a}
\begin{gather}\label{laguerre-1}
\big[ L_{m}^{\alpha}(x) \big]^{2} =\frac{\Gamma(\alpha + m + 1)}{2^{2m} \Gamma(m+1)}\sum_{s=0}^{m} \binom{2m-2s}{m-s}
\frac{\Gamma(2s+1)}{\Gamma(\alpha + s + 1) \Gamma(s+1)}L_{2s}^{2 \alpha}(2x).
\end{gather}

Therefore
\begin{gather}
\big \langle r^{k} \big \rangle_{n \ell} = ( 2 \mu)^{2 \ell} A_{n \ell}^{2} \frac{\Gamma(\ell + n + 1 )}{2^{2(n-\ell-1)} \Gamma(n - \ell)} \nonumber\\
\hphantom{\big \langle r^{k} \big \rangle_{n \ell} =}{} \times \sum_{s=0}^{n - \ell -1}
\binom{2(n-\ell - 1 - s)}{n-\ell - 1 - s} \frac{\Gamma(2s+1)}{\Gamma(2 \ell + 2 + s) \Gamma(s+1)}G_{\ell,k,s}(\mu),\label{rwithG}
\end{gather}
where
\begin{gather}\label{def-G}
G_{\ell,k,s}(\mu) = \int_{0}^{\infty} r^{2 + 2 \ell + k} e^{-2 \mu r}L_{2s}^{2( 2 \ell + 1)}(4 \mu r ) dr.
\end{gather}

To obtain an expression for $G_{\ell,k,s}(\mu)$, the representation
\begin{gather}\label{hyper-laguerre}
L_{n}^{a}(x) = \frac{\Gamma(a+n+1)}{\Gamma(n+1) \Gamma(1+a)} \, \pFq11{-n}{1+a}{\,x}
\end{gather}
for the Laguerre function (see \cite[formula 8.972.1]{gradshteyn-2015a}) is used.

\begin{Theorem}
The integral $G_{\ell,k,s}(\mu)$ is given by
\begin{gather*}
G_{\ell,k,s}(\mu) =\frac{\Gamma(4 \ell + 2s + 3) \Gamma(2 \ell + k + 3)}{\Gamma(2s+1)
\Gamma(4 \ell + 3) (2 \mu)^{2 \ell + k + 3}}\,
\pFq21{{-2s},\, {2 \ell + k + 3}}{4 \ell + 3}{ \, 2}.
\end{gather*}
\end{Theorem}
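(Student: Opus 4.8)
\emph{Proof idea.} The plan is to reduce $G_{\ell,k,s}(\mu)$ in \eqref{def-G} to a standard gamma integral by substituting the hypergeometric representation \eqref{hyper-laguerre} for the Laguerre function in the integrand and integrating the resulting terminating series term by term.

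First, apply \eqref{hyper-laguerre} with $n = 2s$ and $a = 2(2\ell+1) = 4\ell+2$; since $1+a = 4\ell+3$ and $a+n+1 = 4\ell+2s+3$, this gives
\begin{gather*}
L_{2s}^{2(2\ell+1)}(4\mu r) = \frac{\Gamma(4\ell+2s+3)}{\Gamma(2s+1)\,\Gamma(4\ell+3)}\,\pFq11{-2s}{4\ell+3}{\,4\mu r},
\end{gather*}
and because $-2s$ is a non-positive integer the ${}_1F_1$ is the polynomial $\sum_{j=0}^{2s}(-2s)_j(4\mu r)^j/\big[(4\ell+3)_j\,j!\big]$ in $r$. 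Inserting this into \eqref{def-G} and exchanging the finite sum with the integral leaves the elementary evaluation
\begin{gather*}
\int_{0}^{\infty} r^{2\ell+k+j+2}\,e^{-2\mu r}\,dr = \frac{\Gamma(2\ell+k+j+3)}{(2\mu)^{2\ell+k+j+3}},
\end{gather*}
valid for $\mu>0$ and $2\ell+k+3>0$, i.e.\ throughout the admissible range of $k$ recorded after \eqref{mean-1}.

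It then remains to recognize the resulting sum as a Gauss function. Writing $\Gamma(2\ell+k+j+3)=\Gamma(2\ell+k+3)(2\ell+k+3)_j$ by \eqref{gamma-1}, factoring out the $j$-independent constants, and using $(4\mu)/(2\mu)=2$, one obtains
\begin{gather*}
G_{\ell,k,s}(\mu) = \frac{\Gamma(4\ell+2s+3)\,\Gamma(2\ell+k+3)}{\Gamma(2s+1)\,\Gamma(4\ell+3)\,(2\mu)^{2\ell+k+3}}\sum_{j=0}^{2s}\frac{(-2s)_j\,(2\ell+k+3)_j}{(4\ell+3)_j\,j!}\,2^{j},
\end{gather*}
and the last sum is exactly $\pFq21{-2s,\,2\ell+k+3}{4\ell+3}{\,2}$, which is the claim. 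I do not expect a genuine obstacle here: the only analytic step, the interchange of sum and integral, is automatic since the Laguerre polynomial is a true finite sum, and the rest is Pochhammer bookkeeping via \eqref{gamma-1}. The one subtlety worth noting is that the argument $2$ lies outside the unit disk, so the right-hand side must be understood as the value of the terminating polynomial $\sum_{j=0}^{2s}\cdots$ rather than of an analytically continued hypergeometric series — a point that recurs when the formula is fed back into \eqref{rwithG}.
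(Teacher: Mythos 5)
Your proposal is correct and follows essentially the same route as the paper's own proof: substitute the hypergeometric (polynomial) form of $L_{2s}^{2(2\ell+1)}(4\mu r)$, integrate term by term against $r^{2\ell+k+2}e^{-2\mu r}$, convert the gamma factors to Pochhammer symbols via \eqref{gamma-1}, and recognize the terminating ${}_2F_1$ at argument $2$. Your remark that the argument $2$ is harmless because the series terminates is precisely the observation the paper records in the Note following the theorem.
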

\begin{proof}
The hypergeometric representation \eqref{hyper-laguerre} shows that
\begin{gather*}
L_{2s}^{2(2 \ell+1)}(4 \mu r) =
\frac{\Gamma(4 \ell + 2s + 3)}{\Gamma(2s+1) \Gamma(4 \ell + 3 )}\,
\pFq11{-2s}{4 \ell + 3 }{\,4 \mu r}.
\end{gather*}
Expanding the hypergeometric function gives
\begin{align*}
G_{\ell, k ,s}(\mu) & = \frac{\Gamma(4 \ell + 2s + 3)}{\Gamma(2s+1)\Gamma(4 \ell + 3)}\int_{0}^{\infty} \sum_{j=0}^{2s}
\frac{(-2s)_{j}}{(4 \ell + 3 )_{j}} \frac{ (4 \mu r)^{j}}{j!}r^{2 \ell + 2 + k} e^{-2 \mu r} dr \\
& = \frac{\Gamma(4 \ell + 2s + 3)}{\Gamma(2s+1)\Gamma(4 \ell + 3)} \sum_{j=0}^{2s}\frac{(-2s)_{j}}{(4 \ell + 3 )_{j} j!}(4 \mu)^{j}\int_{0}^{\infty}
r^{2 \ell + 2 + k+j} e^{-2 \mu r} dr \\
& = \frac{\Gamma(4 \ell + 2s + 3)}{\Gamma(2s+1)\Gamma(4 \ell + 3)} \sum_{j=0}^{2s}\frac{(-2s)_{j} (4 \mu)^{j}}{(4 \ell + 3 )_{j} j!}
\frac{\Gamma(2 \ell + k +j + 3)}{(2 \mu)^{2 \ell + k + j + 3}} \\
& = \frac{\Gamma(4 \ell + 2s + 3)}{\Gamma(2s+1)\Gamma(4 \ell + 3) (2 \mu)^{2 \ell + k + 3}} \sum_{j=0}^{2s}\frac{(-2s)_{j} 2^{j}}{(4 \ell + 3 )_{j} j!}
\Gamma(2 \ell + k +j + 3) \\
& = \frac{\Gamma(4 \ell + 2s + 3) \Gamma(2 \ell + k + 3)}{\Gamma(2s+1)\Gamma(4 \ell + 3) (2 \mu)^{2 \ell + k + 3}} \sum_{j=0}^{2s}
\frac{(-2s)_{j} (2 \ell + k + 3)_{j} }{(4 \ell + 3 )_{j} j!} 2^{j} \\
& = \frac{\Gamma(4 \ell + 2s + 3) \Gamma(2 \ell + k + 3)}{\Gamma(2s+1)\Gamma(4 \ell + 3) (2 \mu)^{2 \ell + k + 3}} \sum_{j=0}^{\infty}
\frac{(-2s)_{j} (2 \ell + k + 3)_{j} }{(4 \ell + 3 )_{j} j!} 2^{j} \\
& = \frac{\Gamma(4 \ell + 2s + 3) \Gamma(2 \ell + k + 3)}{\Gamma(2s+1)\Gamma(4 \ell + 3) (2 \mu)^{2 \ell + k + 3}}\,
\pFq21{{-2s},\, {2 \ell + k + 3}}{4 \ell + 3}{\, 2}.
\end{align*}
This is the stated form for $G_{\ell,k,s}(\mu)$.
\end{proof}

\begin{Note}
Observe that $s \in \mathbb{N}$, so the hypergeometric function in the expression for $G_{\ell,k,s}(\mu)$ is actually a polynomial in its last variable. Thus, there are no convergence issues.
\end{Note}

The expression for $G_{\ell,k,s}(\mu)$ and~\eqref{rwithG} are used to produce the next result (after the change $s \mapsto n - \ell - 1 -s$).

\begin{Corollary}\label{moments-2}
For $n = 1, 2, \dots$, $\ell = 0, 1, \dots, n-1$ and $k \in \mathbb{Z}$ with $2 \ell + k + 3 > 0$. The moments of the hydrogen atom are given by
\begin{gather*}
\big\langle r^{k} \big\rangle_{n \ell} =\frac{\Gamma(2 \ell + k + 3) (2n+2 \ell)! } {n 2^{2n-2\ell-1} (4 \ell +2)! (2 \mu)^{k} (n+ \ell)! (n-\ell -1)!} \\
\hphantom{\big\langle r^{k} \big\rangle_{n \ell} =}{}\times
\sum_{s=0}^{n- \ell-1} \frac{\binom{n + \ell}{s} \binom{n-\ell-1}{s}}{\binom{2n+2 \ell}{2s}} \, \pFq21{{-2(n-\ell-1-s)},\, {2 \ell + k +3}}{4 \ell + 3}{\,2}.
\end{gather*}
\end{Corollary}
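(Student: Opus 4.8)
The plan is purely computational: substitute the closed form for $G_{\ell,k,s}(\mu)$ furnished by the Theorem into the expansion \eqref{rwithG}, clean up the resulting constants with the help of \eqref{gamma-1}, and then perform the reindexing $s\mapsto n-\ell-1-s$ announced just before the statement. First I would insert the Theorem's formula into \eqref{rwithG}: the factor $\Gamma(2s+1)$ in the numerator of \eqref{rwithG} cancels the $\Gamma(2s+1)$ in the denominator of $G_{\ell,k,s}(\mu)$. Using $A_{n\ell}^{2}=(2\mu)^{3}(n-\ell-1)!/\big(2n\,(n+\ell)!\big)$, the powers of $2\mu$ collapse to $(2\mu)^{2\ell}(2\mu)^{3}(2\mu)^{-(2\ell+k+3)}=(2\mu)^{-k}$; simultaneously the gamma ratio $\Gamma(\ell+n+1)/\Gamma(n-\ell)=(n+\ell)!/(n-\ell-1)!$ arising from \eqref{rwithG} cancels the factorials in $A_{n\ell}^{2}$, while $1/\big(2n\cdot2^{2(n-\ell-1)}\big)=1/\big(n\,2^{2n-2\ell-1}\big)$. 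This leaves the intermediate identity
\begin{gather*}
\big\langle r^{k}\big\rangle_{n\ell}=\frac{\Gamma(2\ell+k+3)}{n\,2^{2n-2\ell-1}\,\Gamma(4\ell+3)\,(2\mu)^{k}}\\
{}\times\sum_{s=0}^{n-\ell-1}\binom{2(n-\ell-1-s)}{n-\ell-1-s}\,\frac{\Gamma(4\ell+2s+3)}{\Gamma(2\ell+2+s)\,\Gamma(s+1)}\,\pFq21{-2s,\,2\ell+k+3}{4\ell+3}{\,2}.
\end{gather*}

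Next I would run the substitution $s\mapsto n-\ell-1-s$. Writing $m=n-\ell-1$, the binomial becomes $\binom{2s}{s}$, the $_2F_1$ acquires the top parameter $-2(n-\ell-1-s)$ displayed in the statement, and the gamma quotient turns into $\Gamma(4\ell+2m-2s+3)/\big(\Gamma(2\ell+2+m-s)\,\Gamma(m-s+1)\big)$. Replacing every gamma by a factorial via $\Gamma(a+1)=a!$ (all arguments here being non-negative integers), it remains only to verify the combinatorial identity
\begin{gather*}
\binom{2s}{s}\,\frac{(2m+4\ell+2-2s)!}{(m+2\ell+1-s)!\,(m-s)!}=\frac{(2n+2\ell)!}{(n+\ell)!\,(n-\ell-1)!}\cdot\frac{\binom{n+\ell}{s}\binom{n-\ell-1}{s}}{\binom{2n+2\ell}{2s}}
\end{gather*}
for $0\le s\le m$, where $n+\ell=m+2\ell+1$ and $2n+2\ell=2m+4\ell+2$. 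This is immediate: expanding the three binomials on the right as ratios of factorials, the $(n+\ell)!=(m+2\ell+1)!$ and $(n-\ell-1)!=m!$ they generate cancel against the same factors in $(2n+2\ell)!/\big((n+\ell)!\,(n-\ell-1)!\big)$, and what survives is exactly the left-hand side. Pulling the $s$-independent constant $(2n+2\ell)!/\big((n+\ell)!\,(n-\ell-1)!\big)$ out of the sum and writing $\Gamma(4\ell+3)=(4\ell+2)!$ then produces the asserted formula.

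I do not anticipate a genuine obstacle; the only thing to be careful about is the bookkeeping of the gamma factors, the powers of $2$ and $2\mu$, and the direction of the reindexing. The hypothesis $2\ell+k+3>0$ is precisely what guarantees that the integral in \eqref{mean-2} (equivalently the factor $\Gamma(2\ell+k+3)$) is finite, and since every index $s$ is a non-negative integer, each $_2F_1$ occurring here is a terminating sum, so no question of convergence arises.
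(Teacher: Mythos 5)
Your proposal is correct and follows exactly the route the paper indicates: substitute the Theorem's closed form for $G_{\ell,k,s}(\mu)$ into \eqref{rwithG}, cancel the gamma factors and powers of $2\mu$ against $A_{n\ell}^{2}$, and reindex via $s\mapsto n-\ell-1-s$ (the paper states this derivation in one line without writing out the bookkeeping, which you have verified correctly, including the binomial identity).
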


\begin{Note}
The restriction $2 \ell +k + 3>0$ avoids the singularities of the gamma factor $\Gamma(2 \ell +k + 3)$. Also observe that the f\/irst entry in the series ${_{2}F_{1}}$ in the answer is a negative integer, therefore the series reduces to a f\/inite sum.
\end{Note}

In this article the expectation values of the powers of the radial coordinate of the hydrogen atom in a framework of quantum mechanics, that is, in the non-relativistic case are computed. In the Introduction it was stated that this already has appeared in the literature. In the relativistic case, results for these expectation values of the powers of the radial coordinate appeared in 2009. Indeed, the relativistic Coulomb integrals are contained in \cite{suslov-2009a, suslov-2010b}. The treatment of the results obtained in \cite{suslov-2009a, suslov-2010b} by computer algebra methods is described in \cite{suslov-2014a, suslov-2013a}.

In the nonrelativistic case of quantum mechanics, the corresponding questions were successfully solved by direct calculation. For example, in~\cite{dehesa-1997a} useful relations between dif\/ferent Laguerre polynomials were found. In \cite{dehesa-2010a} the radial expectation values are given for $D$-dimensional hydrogenic states with $D > 1$. The same quantities are discussed in a more general setting in \cite{dehesa-2016b, dehesa-2016a}. The radial expectation values of hydrogenic states in momentum space appear in~\cite{dehesa-2000a}, represented in terms of Gegenbauer polynomials instead of Laguerre polynomials. All these results were obtained by direct calculations too. The method of brackets may signif\/icantly simplify the calculations for these tasks. This will be discussed in a future publication.

The method of brackets is not the unique successful method which involves integral transformations. Traditional methods based on Mellin--Barnes transformation may be ef\/f\/icient tools in order to obtain new results in quantum f\/ield theory \cite{Allendes:2009bd,Allendes:2012mr,Gonzalez:2012gu,Gonzalez:2012wk,Kniehl:2013dma}.

\section{The method of brackets}\label{sec-method}

The evaluation of the integral giving the mean value $\langle r^{k} \rangle $ \eqref{mean-2} presented in the previous section, used the relation \eqref{laguerre-1} in a~fundamental way. A method to evaluate integrals over the half line $[0, \infty)$, based on a small number of rules has been developed in \cite{gonzalez-2007a,gonzalez-2008a}. This \textit{method of brackets} is described next. The heuristic rules are currently being placed on solid ground \cite{amdeberhan-2012b}. The reader will f\/ind in \cite{gonzalez-2014a, gonzalez-2010a,gonzalez-2010b} a large collection of evaluations of def\/inite integrals that illustrate the power and f\/lexibility of this method.

For $a \in \mathbb{C}$, the symbol
\begin{gather*}
\langle a \rangle =\int_{0}^{\infty }x^{a-1} dx
\end{gather*}
is the {\em bracket} associated to the (divergent) integral on the right. The symbol
\begin{gather*}
\phi_{n} = \frac{(-1)^{n}}{\Gamma(n+1)}
\end{gather*}
is called the {\em indicator} associated to the index $n$. The notation $\phi_{i_{1}i_{2}\cdots i_{r}}$, or simply $\phi_{12 \cdots r}$, denotes the product
$\phi_{i_{1}} \phi_{i_{2}} \cdots \phi_{i_{r}}$.

{\bf {\em Rules for the production of bracket series.}}

{\bf Rule} $\boldsymbol{{\rm P}_{1}}$.
Assign to the integral $\int_{0}^{\infty }f(x)\;dx$ a bracket series:
\begin{gather*}
\sum\limits_{n}\phi _{n} a(n) \langle \alpha n+\beta \rangle .
\end{gather*}
Here the coef\/f\/icients $a(n) $ come from an assumed expansion $f(x) =\sum\limits_{n\geq 0}\phi _{n} a(n) x^{\alpha n+\beta -1}$. The extra `$-1$' in the exponent is set for convenience. The coef\/f\/icients are written as $a(n) $ because these will soon be evaluated at complex numbers $n$, not necessarily positive integers. Now we need to state how to convert the bracket series into a number.

{\bf Rule} $\boldsymbol{{\rm P}_{2}}$. For $\alpha \in \mathbb{C}$, the multinomial power $(a_{1} + a_{2} + \cdots + a_{r})^{\alpha}$ is assigned the
$r$-dimensional bracket series
\begin{gather*}
\sum_{n_{1}}\sum_{n_{2}}\cdots \sum_{n_{r}}\phi _{n_{1} n_{2} \cdots n_{r}}a_{1}^{n_{1}}\cdots a_{r}^{n_{r}}\frac{\langle -\alpha +n_{1}+\cdots +n_{r}\rangle }{\Gamma (-\alpha )}.
\end{gather*}

{\bf {\em Rules for the evaluation of a bracket series.}}

{\bf Rule} $\boldsymbol{{\rm E}_{1}}$. The one-dimensional bracket series is assigned the value
\begin{gather*}
\sum_{n} \phi_{n} f(n) \langle an + b \rangle = \frac{1}{|a|} f(n^{*}) \Gamma(-n^{*}),
\end{gather*}
where $n^{*}$ is obtained from the vanishing of the bracket; that is, $n^{*}$ solves $an+b = 0$. This is precisely the Ramanujan's master theorem.

The next rule provides a value for multi-dimensional bracket series of index $0$, that is, the number of sums is equal to the number of brackets.

{\bf Rule} $\boldsymbol{{\rm E}_{2}}$. Assume the matrix $A = (a_{ij})$ is non-singular, then the assignment is
\begin{gather*}
\sum_{n_{1}} \cdots \sum_{n_{r}} \phi_{n_{1} \cdots n_{r}} f(n_{1},\dots,n_{r})
\langle a_{11}n_{1} + \cdots + a_{1r}n_{r} + c_{1} \rangle \cdots \langle a_{r1}n_{1} + \cdots + a_{rr}n_{r} + c_{r} \rangle\\
\qquad {} = \frac{1}{| \text{det}(A) |} f(n_{1}^{*}, \dots, n_{r}^{*}) \Gamma(-n_{1}^{*}) \cdots \Gamma(-n_{r}^{*}),
\end{gather*}
where $\{ n_{i}^{*} \}$ is the (unique) solution of the linear system obtained from the vanishing of the brackets. There is no assignment if~$A$ is singular.

{\bf Rule} $\boldsymbol{{\rm E}_{3}}$. Each representation of an integral by a bracket series has associated an {\em index of the representation} via
\begin{gather*}
\text{index} = \text{number of sums} - \text{number of brackets}.
\end{gather*}
It is important to observe that the index is attached to a specif\/ic representation of the integral and not just to integral itself. The experience obtained by the authors using this method suggests that, among all representations of an integral as a bracket series, the one with {\em minimal index} should be chosen.

The value of a multi-dimensional bracket series of positive index is obtained by computing all the contributions of maximal rank by Rule $E_{2}$. These contributions to the integral appear as series in the free parameters. Series converging in a~common region are added and divergent series are discarded. Any series producing a non-real contribution is also discarded. There is no assignment to a bracket series of negative index.

\section{The evaluation of the expectations. A f\/irst bracket calculation}\label{sec-expectations}

This section describes the evaluation of the integral
\begin{gather}\label{integral-1}
I_{n,\ell,k}(\mu):= \int_{0}^{\infty} r^{2 + 2 \ell +k}e^{-2 \mu r} \big[ L_{n- \ell -1}^{2 \ell +1}(2 \mu r) \big]^{2} dr,
\end{gather}
that appeared in \eqref{mean-1} by the method of brackets. The expectation value of a power of the radial coordinate is then given by
\begin{gather*}
\big\langle r^{k} \big \rangle_{n \ell} = ( 2 \mu)^{2 \ell} A_{n \ell}^{2} I_{n,\ell,k}(\mu).
\end{gather*}

This integral can be scaled to
\begin{gather}\label{int-I}
I_{n,\ell,k}(\mu) = \frac{1}{(2 \mu)^{3 + 2 \ell + k}}\int_{0}^{\infty} t^{2 + 2 \ell +k} e^{-t} \big[ L_{n- \ell -1}^{2 \ell +1}(t) \big]^{2} dt.
\end{gather}
This does not appear in the table~\cite{gradshteyn-2015a}. The closest entry is~7.414.10:
\begin{gather*}
\int_{0}^{\infty} e^{-bx} x^{2a} \big[ L_{n}^{a}(x) \big]^{2} dx = \frac{2^{2a} \Gamma \big( a + \tfrac{1}{2} \big)
\Gamma \big( n + \tfrac{1}{2} \big) } {\pi (n!)^{2}b^{2a+1}} \Gamma(a+n+1) \, \pFq21{-n,\,a + \tfrac{1}{2}}{\tfrac{1}{2}-n}{\left( 1 - \tfrac{2}{b}\right)^{2}}.
\end{gather*}

\begin{Note}
In the evaluation of \eqref{integral-1}, it is convenient to write it as
\begin{gather*}
I_{n,\ell,k:A,B,C}(\mu):= \int_{0}^{\infty} r^{2 + 2 \ell +k} e^{-Ar} L_{n- \ell -1}^{2 \ell +1}(Br)L_{n- \ell -1}^{2 \ell +1}(Cr) dr
\end{gather*}
and then consider the limiting value as $A$, $B$, $C$ tend to $2 \mu$.
\end{Note}

The computation of \eqref{int-I} described in this section is obtained without any further identities for the Laguerre function. Next section describes the computation of the function $G_{\ell,k,s}(\mu)$, def\/ined in~\eqref{def-G}.

The f\/irst step is to compute a series representation for the factors in the integrand.

\begin{Lemma}\label{lemma-brackets}
The functions in the integrand of \eqref{integral-1} have series given by
\begin{gather*}
e^{-ax} = \sum_{n_{1}} \phi_{n_{1}} a^{n_{1}} x^{n_{1}}
\end{gather*}
and
\begin{gather*}
L_{m}^{\alpha}(x) = \Gamma(\alpha+1+m) \sum_{n_{2}} \phi_{n_{2}}\frac{x^{n_{2}}}{\Gamma(1+m-n_{2}) \Gamma(1 + \alpha + n_{2})}.
\end{gather*}
\end{Lemma}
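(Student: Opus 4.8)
The plan is to establish the two series representations in Lemma~\ref{lemma-brackets} directly from the defining expansions and the definition of the indicator $\phi_{n}$.

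First I would handle the exponential. By the Taylor series of the exponential function, $e^{-ax} = \sum_{n_{1} \geq 0} \frac{(-ax)^{n_{1}}}{n_{1}!} = \sum_{n_{1} \geq 0} \frac{(-1)^{n_{1}}}{\Gamma(n_{1}+1)} a^{n_{1}} x^{n_{1}}$. Recognizing $\frac{(-1)^{n_{1}}}{\Gamma(n_{1}+1)} = \phi_{n_{1}}$ from the definition of the indicator in Section~\ref{sec-method}, this is exactly $\sum_{n_{1}} \phi_{n_{1}} a^{n_{1}} x^{n_{1}}$, which matches Rule~${\rm P}_{1}$ with $a(n_{1}) = a^{n_{1}}$, $\alpha = 1$, $\beta = 1$.

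Next I would treat the Laguerre function. Starting from the hypergeometric representation $L_{m}^{\alpha}(x) = \frac{\Gamma(\alpha+m+1)}{\Gamma(m+1)\Gamma(1+\alpha)} \, \pFq11{-m}{1+\alpha}{\,x}$ given in \eqref{hyper-laguerre}, I would expand the ${}_{1}F_{1}$ as $\sum_{n_{2} \geq 0} \frac{(-m)_{n_{2}}}{(1+\alpha)_{n_{2}}} \frac{x^{n_{2}}}{n_{2}!}$. The key algebraic manipulation is to rewrite the two Pochhammer symbols using the gamma-function identity \eqref{gamma-1}: write $(1+\alpha)_{n_{2}} = \Gamma(1+\alpha+n_{2})/\Gamma(1+\alpha)$, and for $(-m)_{n_{2}}$ use the reflection form of $(a)_{-n}$ in \eqref{gamma-1} together with $(-m)_{n_{2}} = (-1)^{n_{2}} m!/(m-n_{2})! = (-1)^{n_{2}} \Gamma(m+1)/\Gamma(m+1-n_{2})$, the latter being valid for $0 \le n_{2} \le m$ and, when extended via the gamma function, giving zero contribution for $n_{2} > m$ since $1/\Gamma(m+1-n_{2})$ vanishes at non-positive integer arguments. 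Substituting these and also writing $1/n_{2}! = \phi_{n_{2}} (-1)^{n_{2}}$, the factors of $(-1)^{n_{2}}$ cancel and the factors of $\Gamma(m+1)$ and $\Gamma(1+\alpha)$ cancel against the prefactor $\frac{\Gamma(\alpha+m+1)}{\Gamma(m+1)\Gamma(1+\alpha)}$, leaving exactly $\Gamma(\alpha+1+m) \sum_{n_{2}} \phi_{n_{2}} \frac{x^{n_{2}}}{\Gamma(1+m-n_{2})\Gamma(1+\alpha+n_{2})}$, as claimed.

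The only delicate point, and the one I would be careful about, is the bookkeeping of the $(-1)^{n_{2}}$ factors and the passage from the finite sum $\sum_{n_{2}=0}^{m}$ to the formal sum $\sum_{n_{2}}$ over all non-negative integers: this extension is harmless precisely because $\Gamma(1+m-n_{2})^{-1} = 0$ for $n_{2} = m+1, m+2, \dots$, so no spurious terms are introduced. Everything else is a routine cancellation of gamma factors, so there is no substantial obstacle; the lemma is essentially a restatement of the power-series definitions in the notation of the method of brackets.
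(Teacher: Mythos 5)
Your proposal is correct and follows essentially the same route as the paper: expand the exponential and the ${}_{1}F_{1}$ representation \eqref{hyper-laguerre}, convert the Pochhammer symbols into gamma quotients, and absorb $(-1)^{n_{2}}/n_{2}!$ into $\phi_{n_{2}}$. The only cosmetic difference is that the paper treats $m$ as a non-integer parameter and uses $\Gamma(n_{2}-m)/\Gamma(-m) = (-1)^{n_{2}}\Gamma(1+m)/\Gamma(1+m-n_{2})$, whereas you work with integer $m$ and justify extending the finite sum by the vanishing of $1/\Gamma(1+m-n_{2})$ for $n_{2}>m$, which is the same observation in different clothing.
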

\begin{proof}
The series of the exponential function is elementary. Indeed,
\begin{gather*}
e^{-ax} = \sum_{n_{1} \geq 0}
\frac{(-a)^{n_{1}}}{n_{1}!}x^{n_{1}}
= \sum_{n_{1} \geq 0} \frac{(-1)^{n_{1}}}{n_{1}!} (ax)^{n_{1}}
 = \sum_{n_{1}} \phi_{n_{1}} (ax)^{n_{1}}.
\end{gather*}
To evaluate the series of the Laguerre function, treat $m$ as a~real non-integer parameter, and observe that
\begin{align*}
L_{m}^{\alpha}(x) & = \frac{\Gamma(\alpha+1+m)}{\Gamma(\alpha+1) \Gamma(m+1)}
\sum_{n_{2} = 0}^{\infty} \frac{(-m)_{n_{2}}}{(\alpha+1)_{n_{2}}}
\frac{x^{n_{2}}}{n_{2}!} \\
& = \frac{\Gamma(\alpha + 1 + m)}{\Gamma(m+1)}\sum_{n_{2} = 0}^{\infty} \frac{\Gamma(n_{2}-m)}{\Gamma(-m) \Gamma(\alpha +1
+n_{2})} \frac{x^{n_{2}}}{n_{2}!}.
\end{align*}
The series for the Laguerre function now follows from the identity
\begin{gather*}
\frac{\Gamma(n_{2}-m)}{\Gamma(-m)} = (-1)^{n_{2}} \frac{\Gamma(1+m)}
{\Gamma(1+m-n_{2})}
\end{gather*}
valid for $n_{2} \in \mathbb{N}$ and $m \not \in \mathbb{N}$.
\end{proof}

The series given in Lemma \ref{lemma-brackets} are now used directly to evaluate the integral \eqref{integral-1}. This gives
\begin{gather*}
I_{n,\ell,k;A,B,C}(\mu) = \int_{0}^{\infty} r^{2 + 2 \ell + k}
\left[ \sum_{n_{1}} A^{n_{1}} \phi_{n_{1}} r^{n_{1}} \right] \\
\hphantom{I_{n,\ell,k;A,B,C}(\mu) =}{} \times \left[ \sum_{n_{2}} \frac{\Gamma(\ell + n + 1)}{\Gamma(n - \ell -n_{2})
\Gamma(2 \ell + 2 + n_{2})} \phi_{n_{2}} B^{n_{2}} r^{n_{2}} \right] \\
\hphantom{I_{n,\ell,k;A,B,C}(\mu) =}{} \times \left[ \sum_{n_{3}} \frac{\Gamma(\ell + n + 1)}{\Gamma(n - \ell -n_{3})
\Gamma(2 \ell + 2 + n_{3})} \phi_{n_{3}} C^{n_{3}} r^{n_{3}} \right] dr \\
\hphantom{I_{n,\ell,k;A,B,C}(\mu)}{} = \sum_{n_{1},n_{2},n_{3}}\int_{0}^{\infty} r^{2 + 2 \ell + k + n_{1}+n_{2}+n_{3}} dr
A^{n_{1}} B^{n_{2}} C^{n_{3}} \phi_{n_{1},n_{2},n_{3}} \\
\hphantom{I_{n,\ell,k;A,B,C}(\mu) =}{} \times \frac{\Gamma^{2}(\ell + n + 1)}{\Gamma(n- \ell - n_{2}) \Gamma(2\ell + 2 +n_{2})
\Gamma(n- \ell - n_{3}) \Gamma(2\ell + 2 +n_{3}) } \\
\hphantom{I_{n,\ell,k;A,B,C}(\mu)}{} = \sum_{n_{1},n_{2},n_{3}} \langle n_{1} + n_{2} + n_{3} + 3 + 2 \ell + k \rangle
A^{n_{1}} B^{n_{2}} C^{n_{3}} \phi_{n_{1},n_{2},n_{3}} \\
\hphantom{I_{n,\ell,k;A,B,C}(\mu) =}{} \times \frac{\Gamma^{2}(\ell + n + 1)} {\Gamma(n- \ell - n_{2}) \Gamma(2\ell + 2 +n_{2})
\Gamma(n- \ell - n_{3}) \Gamma(2\ell + 2 +n_{3}) }.
\end{gather*}

This intermediate result is stated next.

\begin{Proposition}
A bracket series for the integral $I_{n,\ell,k;A,B,C}(\mu)$ is given by
\begin{gather*}
I_{n,\ell,k;A,B,C}(\mu) = \sum_{n_{1},n_{2},n_{3}}\langle n_{1} + n_{2} + n_{3} + 3 + 2 \ell + k \rangle
A^{n_{1}} B^{n_{2}} C^{n_{3}} \phi_{n_{1},n_{2},n_{3}} \\
\hphantom{I_{n,\ell,k;A,B,C}(\mu) =}{} \times \frac{\Gamma^{2}(\ell + n + 1)}
{\Gamma(n- \ell - n_{2}) \Gamma(2\ell + 2 +n_{2})\Gamma(n- \ell - n_{3}) \Gamma(2\ell + 2 +n_{3}) }.
\end{gather*}
\end{Proposition}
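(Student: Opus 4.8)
The plan is to substitute the two series from Lemma~\ref{lemma-brackets} directly into the integrand of $I_{n,\ell,k;A,B,C}(\mu)$ and then to invoke Rule~${\rm P}_{1}$ (in three summation variables) to turn the resulting scale integral into a bracket. First I would expand the exponential factor as $e^{-Ar} = \sum_{n_{1}} \phi_{n_{1}} A^{n_{1}} r^{n_{1}}$. Next, applying the Laguerre series of Lemma~\ref{lemma-brackets} with $m = n-\ell-1$ and $\alpha = 2\ell+1$ — so that $\Gamma(\alpha+1+m) = \Gamma(\ell+n+1)$, $\Gamma(1+m-n_{2}) = \Gamma(n-\ell-n_{2})$ and $\Gamma(1+\alpha+n_{2}) = \Gamma(2\ell+2+n_{2})$ — I would write $L_{n-\ell-1}^{2\ell+1}(Br) = \Gamma(\ell+n+1)\sum_{n_{2}}\phi_{n_{2}}\, B^{n_{2}} r^{n_{2}}\big/\big(\Gamma(n-\ell-n_{2})\Gamma(2\ell+2+n_{2})\big)$ and likewise for the argument $Cr$ with index $n_{3}$.

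The second step is to multiply the three series together, combining the prefactor $r^{2+2\ell+k}$ with the powers $r^{n_{1}+n_{2}+n_{3}}$ produced by the expansions, so that the integrand becomes a triple sum over $n_{1},n_{2},n_{3}$ of $r^{2+2\ell+k+n_{1}+n_{2}+n_{3}}$ multiplied by $A^{n_{1}}B^{n_{2}}C^{n_{3}}\phi_{n_{1},n_{2},n_{3}}$ and by the gamma quotient $\Gamma^{2}(\ell+n+1)\big/\big(\Gamma(n-\ell-n_{2})\Gamma(2\ell+2+n_{2})\Gamma(n-\ell-n_{3})\Gamma(2\ell+2+n_{3})\big)$. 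Interchanging the (formal) triple sum with the integral and recognizing that $\int_{0}^{\infty} r^{(3+2\ell+k+n_{1}+n_{2}+n_{3})-1}\,dr = \langle n_{1}+n_{2}+n_{3}+3+2\ell+k \rangle$ by the definition of the bracket symbol yields precisely the stated expression; this is exactly what Rule~${\rm P}_{1}$ prescribes.

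The only delicate point — and it is the heuristic heart of the method of brackets rather than a genuine analytic obstacle — is the legitimacy of treating $m = n-\ell-1$ as a non-integer parameter when applying Lemma~\ref{lemma-brackets} and of exchanging the triple sum with the divergent scale integral. At the level of formal bracket series both operations are built into the rules, so no estimates are required. As a consistency check one may observe that, because $m\in\mathbb{N}$, the factor $1/\Gamma(n-\ell-n_{2})$ annihilates every term with $n_{2} > n-\ell-1$ (and similarly in $n_{3}$), so the Laguerre expansions used here are in fact the genuine finite sums one expects. The proposition merely records this bracket-series representation of $I_{n,\ell,k;A,B,C}(\mu)$; its reduction to a number is postponed to the subsequent application of Rules~${\rm E}_{1}$--${\rm E}_{3}$.
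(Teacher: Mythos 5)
Your proposal is correct and follows essentially the same route as the paper: substitute the exponential and Laguerre series of Lemma~\ref{lemma-brackets} with $m=n-\ell-1$, $\alpha=2\ell+1$, interchange the triple sum with the integral, and replace $\int_{0}^{\infty} r^{2+2\ell+k+n_{1}+n_{2}+n_{3}}\,dr$ by the bracket $\langle n_{1}+n_{2}+n_{3}+3+2\ell+k\rangle$. Your added remark that $1/\Gamma(n-\ell-n_{2})$ truncates the series for integer $m$ is a sound consistency check, though not needed for the formal bracket-series statement.
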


The bracket series above contains one bracket and three indices, thus it is expected that the method will produce a double series as an expression for $I_{n,\ell,k;A,B,C}(\mu)$. The vanishing of the bracket gives
\begin{gather}\label{rel-index}
n_{1}+n_{2}+n_{3} = -3 - 2 \ell -k,
\end{gather}
showing the two free indices.

\textit{Solving for $n_{3}$}. Replacing $n_{3} = -n_{1}-n_{2}-t$, with $t = 2 \ell + k + 3$, in the bracket series yields the expression
\begin{gather*}
I_{n,\ell,k;A,B,C}(\mu) = \frac{\Gamma^{2}(n + \ell + 1)}{C^{t}} \\
\qquad{}\times\sum_{n_{1},n_{2} = 0}^{\infty}
\frac{\Gamma(n_{1}+n_{2}+t) \left( - \frac{A}{C} \right)^{n_{1}}\left( - \frac{B}{C} \right)^{n_{2}} }
{ \Gamma(n- \ell - n_{2}) \Gamma(2 \ell + 2 +n_{2})\Gamma(n_{1}+n_{2}+s) \Gamma(-1-k-n_{1}-n_{2}) n_{1}! n_{2}!}
\end{gather*}
with $s= n+ \ell + 3 + k$. Using \eqref{gamma-1} yields
\begin{gather*}
I_{n,\ell,k;A,B,C}(\mu) =\frac{\Gamma^{2}(n + \ell + 1) \Gamma(t)}{C^{t}\Gamma(n - \ell) \Gamma(2 \ell + 2) \Gamma(s) \Gamma(-1-k)} \\
\hphantom{I_{n,\ell,k;A,B,C}(\mu) =}{}\times\sum_{n_{1},n_{2}=0}^{\infty}
\frac{(t)_{n_{1}+n_{2}} (1- n + \ell)_{n_{2}} (k+2)_{n_{1}+n_{2}} }{(2 \ell + 2)_{n_{2}} (s)_{n_{1}+n_{2}} n_{1}!n_{2}!}
(-1)^{n_{2}}\left( \frac{A}{C} \right)^{n_{1}}\left( \frac{B}{C} \right)^{n_{2}}.
\end{gather*}
Then use
\begin{gather*}
(b)_{n_{1}+n_{2}} = (b)_{n_{2}} (b+n_{2})_{n_{1}}
\end{gather*}
to produce
\begin{gather*}
I_{n,\ell,k;A,B,C}(\mu) =\frac{\Gamma^{2}(n + \ell + 1) \Gamma(t)}{C^{t}\Gamma(n - \ell) \Gamma(2 \ell + 2) \Gamma(s) \Gamma(-1-k)} \\
\qquad{}\times\!\sum_{n_{1},n_{2}=0}^{\infty}\!\frac{(t)_{n_{2}} (t+n_{2})_{n_{1}} (1- n + \ell)_{n_{2}}
(k+2)_{n_{2}} (k+2+n_{2})_{n_{1}} }{(2 \ell + 2)_{n_{2}} (s)_{n_{2}} (s+n_{2})_{n_{1}} n_{1}!n_{2}!}(-1)^{n_{2}}
\left( \frac{A}{C} \right)^{n_{1}}\left( \frac{B}{C} \right)^{n_{2}}.
\end{gather*}
The sum corresponding to the index $n_{1}$, which appears only in $3$ places, is chosen as the internal sum. This yields
\begin{gather*}
I_{n,\ell,k;A,B,C}(\mu) =\frac{\Gamma^{2}(n + \ell + 1) \Gamma(t)}{C^{t}
\Gamma(n - \ell) \Gamma(2 \ell + 2) \Gamma(s) \Gamma(-1-k)} \\
\qquad{}\times \sum_{n_{2}=0}^{\infty}\frac{(t)_{n_{2}} (k+2)_{n_{2}} (1-n+ \ell)_{n_{2}} }
{(2 \ell + 2)_{n_{2}} (s)_{n_{2}} n_{2}! }\left( -\frac{B}{C} \right)^{n_{2}}
\sum_{n_{1}=0}^{\infty}\frac{(t+n_{2})_{n_{1}} (k+2+n_{2})_{n_{1}} }{(s+n_{2})_{n_{1}} n_{1}!}\left( \frac{A}{C} \right)^{n_{1}}.
\end{gather*}
The inner sum is now identif\/ied as a hypergeometric function to produce
\begin{gather*}
I_{n,\ell,k;A,B,C}(\mu) =\frac{\Gamma^{2}(n + \ell + 1) \Gamma(t)}{C^{t}
\Gamma(n - \ell) \Gamma(2 \ell + 2) \Gamma(s) \Gamma(-1-k)} \\
\qquad{}\times\sum_{n_{2}=0}^{\infty} \frac{(t)_{n_{2}} , (k+2)_{n_{2}} (1-n+ \ell)_{n_{2}} }
{(2 \ell + 2)_{n_{2}} (s)_{n_{2}} n_{2}! }\left( -\frac{B}{C} \right)^{n_{2}}\,
\pFq21{{t+n_{2}}, \, {2+k+n_{2}}}{s+n_{2}}{\frac{A}{C}}.
\end{gather*}

\begin{Note}
The same procedure can be used to treat the cases obtained by solving for $n_{1}$ or $n_{2}$ in the equation \eqref{rel-index}. The corresponding integrals are
\begin{gather*}
I_{n,\ell,k;A,B,C}^{(1)}(\mu) = \frac{\Gamma^{2}(n + \ell + 1) \Gamma(t)}{A^{t}\Gamma^{2}(n - \ell) \Gamma^{2}(2 \ell + 2)} \\
\hphantom{I_{n,\ell,k;A,B,C}^{(1)}(\mu) =}{}\times \sum_{n_{2}=0}^{\infty} \frac{(t)_{n_{2}} (1-n+ \ell)_{n_{2}} }
{(2 \ell + 2)_{n_{2}} n_{2}! }\left( \frac{B}{A} \right)^{n_{2}}\,
\pFq21{{t+n_{2}}, \, {1-n+\ell}}{2 \ell + 2}{\frac{C}{A}}
\end{gather*}
and
\begin{gather*}
I_{n,\ell,k;A,B,C}^{(2)}(\mu) = \frac{\Gamma^{2}(n + \ell + 1) \Gamma(t)}{A^{t}\Gamma^{2}(n - \ell) \Gamma^{2}(2 \ell + 2)} \\
\hphantom{I_{n,\ell,k;A,B,C}^{(2)}(\mu) =}{}\times\sum_{n_{3}=0}^{\infty}
\frac{(t)_{n_{3},} (1-n+ \ell)_{n_{3}} } {(2 \ell + 2)_{n_{3}} n_{3}! }
\left( \frac{C}{A} \right)^{n_{3}}\,\pFq21{{t+n_{3}}, \, {1-n+\ell}}{2 \ell + 2}{\frac{B}{A}}.
\end{gather*}
\end{Note}

At this point, the parameters $A$, $B$, $C$ are replaced by the value $2 \mu$, in order to continue the evaluation. This gives
\begin{gather*}
I_{n,\ell,k}(\mu) =\frac{\Gamma^{2}(n + \ell + 1) \Gamma(t)}{(2 \mu)^{t}\Gamma(n - \ell) \Gamma(2 \ell + 2) \Gamma(s) \Gamma(-1-k)} \\
\hphantom{I_{n,\ell,k}(\mu) =}{}\times \sum_{n_{2}=0}^{\infty}
\frac{(t)_{n_{2}} (k+2)_{n_{2}} (1-n+ \ell)_{n_{2}} }{(2 \ell + 2)_{n_{2}} (s)_{n_{2}} n_{2}! }(-1)^{n_{2}}\,
\pFq21{{t+n_{2}}, \, {2+k+n_{2}}}{s+n_{2}}{\,1}.
\end{gather*}
Observe that $1- n + \ell$ is a negative integer, so this is actually a f\/inite sum. Using Gauss' evaluation
\begin{gather*}
\pFq21{{a},{b}}{c}{1}= \frac{\Gamma(c)\Gamma(c-a-b)}{\Gamma(c-a)\Gamma(c-b)} \qquad \text{for} \quad {c-a-b} > 0,
\end{gather*}
and expressing the resulting gamma factors in terms of Pochhammer symbols to obtain
\begin{gather*}
I_{n,\ell,k}(\mu) =\frac{\Gamma(n+ \ell+1) \Gamma(2 \ell + k + 3) \Gamma(n - \ell -k -2)}{(2 \mu)^{2 \ell + k + 3}
\Gamma^{2}(n- \ell) \Gamma(2 \ell + 2) \Gamma(-1-k)} \\
\hphantom{I_{n,\ell,k}(\mu) =}{}\times \sum_{n_{2}=0}^{\infty}\frac{(k+2)_{n_{2}} (1- n + \ell)_{n_{2}} (2 \ell+ k + 3)_{n_{2}}}
{(2 \ell + 2)_{n_{2}} (\ell + k+3-n)_{n_{2}} n_{2}!}.
\end{gather*}

The f\/inal step identif\/ies this series as a hypergeometric series to produce:
\begin{gather*}
I_{n,\ell,k}(\mu) =\frac{\Gamma(n+ \ell+1) \Gamma(2 \ell + k + 3) \Gamma(n - \ell -k -2)}{(2 \mu)^{2 \ell + k + 3}
\Gamma^{2}(n- \ell) \Gamma(2 \ell + 2) \Gamma(-1-k)} \\
\hphantom{I_{n,\ell,k}(\mu) =}{}\times \pFq32{{k+2}, \,{1+\ell-n}, \,{2 \ell + k + 3}}{{2 \ell + 2}, \, {l+k+3 -n}}{\,1}.
\end{gather*}

The results of this section are summarized in the next statement.

\begin{Theorem}\label{f3}
For $n$, $\ell$, $k$ as above,
\begin{gather*}
\big\langle r^{k} \big\rangle_{n \ell} =\frac{\Gamma(2 \ell + k + 3) \Gamma(n - \ell -k - 2)}
{2n (2\mu)^{k} \Gamma(n - \ell) \Gamma(2 \ell + 2) \Gamma(-1-k)} \,
\pFq32{{k+2}, \,{1+\ell-n}, \, {2 \ell + k + 3}}{{2 \ell + 2}, \, {l+k+3 -n}}{\,1}.
\end{gather*}
\end{Theorem}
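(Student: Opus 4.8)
The plan is to obtain Theorem~\ref{f3} by combining two facts already established in this section: the identity $\big\langle r^{k}\big\rangle_{n\ell} = (2\mu)^{2\ell}A_{n\ell}^{2}I_{n,\ell,k}(\mu)$, the explicit normalization $A_{n\ell}^{2} = \frac{(2\mu)^{3}}{2n}\frac{(n-\ell-1)!}{(n+\ell)!}$, and the closed form derived just above,
\[
I_{n,\ell,k}(\mu) = \frac{\Gamma(n+\ell+1)\Gamma(2\ell+k+3)\Gamma(n-\ell-k-2)}{(2\mu)^{2\ell+k+3}\,\Gamma^{2}(n-\ell)\Gamma(2\ell+2)\Gamma(-1-k)}\,\pFq32{{k+2},\,{1+\ell-n},\,{2\ell+k+3}}{{2\ell+2},\,{\ell+k+3-n}}{\,1}.
\]
So the entire argument is a matter of collecting elementary prefactors; no further analytic input is required.

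First I would track the powers of $2\mu$: the factor $(2\mu)^{2\ell}$ from the prefactor, the $(2\mu)^{3}$ inside $A_{n\ell}^{2}$, and the $(2\mu)^{-(2\ell+k+3)}$ from $I_{n,\ell,k}(\mu)$ combine to $(2\mu)^{2\ell+3-(2\ell+k+3)}=(2\mu)^{-k}$, which is the $(2\mu)^{k}$ in the denominator of the statement. Next I would reduce the gamma/factorial part: writing $(n+\ell)!=\Gamma(n+\ell+1)$ and $(n-\ell-1)!=\Gamma(n-\ell)$, the factor $\Gamma(n+\ell+1)$ in the numerator of $I_{n,\ell,k}(\mu)$ cancels the $(n+\ell)!$ coming from $A_{n\ell}^{2}$, while one factor $\Gamma(n-\ell)$ cancels against $\Gamma^{2}(n-\ell)$; together with the $\tfrac{1}{2n}$ this leaves precisely
\[
\frac{1}{2n\,(2\mu)^{k}}\,\frac{\Gamma(2\ell+k+3)\Gamma(n-\ell-k-2)}{\Gamma(n-\ell)\Gamma(2\ell+2)\Gamma(-1-k)},
\]
multiplied by the same ${}_{3}F_{2}$, which is the asserted formula.

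There is no genuine obstacle here beyond the bookkeeping, but two points deserve a remark. The admissible range of $k$, namely the integers with $2\ell+k+3>0$ (so that $\Gamma(2\ell+k+3)$ is finite and the condition $c-a-b>0$ needed for the Gauss summation used in deriving $I_{n,\ell,k}(\mu)$ holds), is inherited from the earlier steps and should be recorded. Moreover, as already noted, $1+\ell-n$ is a non-positive integer when $0\le\ell\le n-1$, so the ${}_{3}F_{2}$ in the final formula terminates and no convergence question arises; one could additionally observe that the two alternative bracket representations $I^{(1)}_{n,\ell,k;A,B,C}$ and $I^{(2)}_{n,\ell,k;A,B,C}$ yield the same value of $I_{n,\ell,k}(\mu)$ after letting $A,B,C\to 2\mu$, providing a consistency check on the computation.
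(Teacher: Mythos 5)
Your proposal is correct and follows the same route as the paper: Theorem~\ref{f3} is stated there as the summary of the preceding bracket computation, so its proof is exactly the substitution of the closed form of $I_{n,\ell,k}(\mu)$ into $\big\langle r^{k}\big\rangle_{n\ell}=(2\mu)^{2\ell}A_{n\ell}^{2}I_{n,\ell,k}(\mu)$ with $A_{n\ell}^{2}=\frac{(2\mu)^{3}}{2n}\frac{(n-\ell-1)!}{(n+\ell)!}$, and your bookkeeping of the $(2\mu)^{-k}$ power and the cancellations of $\Gamma(n+\ell+1)$ and one factor of $\Gamma(n-\ell)$ is accurate. The remarks on the range of $k$ and the termination of the ${}_{3}F_{2}$ match the paper's own caveats, so nothing is missing.
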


\section{The evaluation of the expectations. A second approach}\label{sec-expectations-part2}

The moment $\langle r^{k} \rangle_{n \ell}$ has been expressed in \eqref{rwithG} as a f\/inite sum values of the integral
\begin{gather}\label{g-def-00}
G_{\ell,k,s}(\mu) = \int_{0}^{\infty} r^{2 + 2 \ell + k} e^{-2 \mu r} L_{2s}^{2( 2 \ell + 1)}( 4 \mu r ) dr,
\end{gather}
where the index $s$ is an integer varying from $0$ to $n - \ell - 1$. Corollary \ref{moments-2} provides an expression for~$\langle r^{k} \rangle_{n \ell}$ as a f\/inite sum of values of the hypergeometric function ${_{2}F_{1}}$ evaluated at the argument~$2$. The hypergeometric terms appearing in the mentioned representation are actually f\/inite sums, so the convergence of the series is not an issue. An alternative form is derived in this section that extends the range of validity of $G_{\ell,k,s}(\mu)$ to a larger range for the parameter~$s$.

The goal is to produce a representation of the series for the Laguerre polynomials, given initially by
\begin{gather*}
L_{n}^{\alpha}(x) = \frac{\Gamma(\alpha + n + 1)}{\Gamma(n+1) \Gamma(\alpha + 1)}\, \pFq11{-n}{\alpha+1}{\,x}.
\end{gather*}
This series is now written in a form suitable for the application of the method of brackets:
\begin{align*}
L_{n}^{\alpha}(x) & = \frac{\Gamma(\alpha + n + 1)}{\Gamma(n+1) \Gamma(\alpha + 1)}\sum_{k_{1}=0}^{\infty} \frac{(-n)_{k_{1}}}{(\alpha+1)_{k_{1}}} \frac{x^{k_{1}}}{k_{1}!} \\
& = \frac{\Gamma(\alpha + n + 1)}{\Gamma(n+1) \Gamma(\alpha + 1)}\sum_{k_{1}=0}^{\infty} (-1)^{k_{1}} (-n)_{k_{1}} (-\alpha )_{-k_{1}} \frac{x^{k_{1}}}{k_{1}!} \\
& = \frac{\Gamma(\alpha + n + 1)}{\Gamma(n+1) \Gamma(\alpha + 1)}\sum_{k_{1}} \phi_{1} (-n)_{k_{1}} (- \alpha)_{k_{1}} x^{k_{1}} \\
& = \frac{\Gamma(\alpha + n + 1)}{\Gamma(n+1) \Gamma(\alpha+1) \Gamma(-n) \Gamma(- \alpha)}\sum_{k_{1}} \phi_{1} \Gamma( - n + k_{1}) \Gamma(- \alpha -k_{1}) x^{k_{1}}.
\end{align*}
To produce a bracket series representation of the last expression, observe that
\begin{gather*}
\Gamma(\beta) = \sum_{\ell} \phi_{\ell} \langle \beta + \ell \rangle
\end{gather*}
and this leads to
\begin{gather*}
L_{n}^{\alpha}(x) = \frac{\Gamma(\alpha + n + 1)}{\Gamma(n+1) \Gamma(\alpha+1) \Gamma(-n) \Gamma(-\alpha)}
\sum_{k_{1},k_{2}k_{3}} \phi_{123} \langle -n + k_{1} + k_{2} \rangle \langle -\alpha -k_{1} + k_{3} \rangle x^{k_{1}}.
\end{gather*}
The vanishing of the brackets provides two representations for the Laguerre function, denoted by $T_{j}$.

\textit{Case 1.} Take $k_{1}$ as a free index. Then $k_{2}^{*} = n - k_{1}$ and $k_{3}^{*} = k_{1} + \alpha$ yields the
expression{\samepage
\begin{gather*}
T_{1} = \frac{\Gamma(\alpha+n+1)}{\Gamma(n+1) \Gamma(\alpha + 1)}
\sum_{k_{1}=0}^{\infty} \frac{(-n)_{k_{1}}}{(\alpha + 1)_{k_{1}}} \frac{x^{k_{1}}}{k_{1}!}.
\end{gather*}
This is the original series for $L_{n}^{\alpha}(x)$.}

\textit{Case 2.} Take $k_{2}$ as a free index. Then $k_{1}^{*}= n - k_{2}$ and $k_{3}^{*} = \alpha + n - k_{2}$ yields
\begin{gather}\label{L-case2}
T_{2} = \frac{\Gamma(\alpha + n+1) x^{n}}{\Gamma(n+1) \Gamma(\alpha+1) \Gamma(-n) \Gamma(-\alpha)}
\sum_{k_{2}=0}^{\infty} \Gamma(-n+k_{2}) \Gamma(-\alpha -n + k_{2}) \frac{(-x)^{-k_{2}}}{k_{2}!}.
\end{gather}

\textit{Case 3.} Taking $k_{3}$ as a free index does not produce a representation for $L_n^{\alpha}(x)$.

The next step is to use the $T_{2}$ representation to evaluate the integral $G_{\ell,k,s}(\mu)$.
By equa\-tion~(\ref{L-case2}), the expression for $L_{n}^{\alpha}(x)$ is now written as
\begin{gather*}
L_{n}^{\alpha}(x) = \frac{\Gamma(\alpha + n+1) x^{n}}{\Gamma(n+1) \Gamma(\alpha+1) \Gamma(-n) \Gamma(-\alpha)}
\sum_{j=0}^{\infty} \phi_{j} \Gamma(-n+j) \Gamma(-\alpha -n + j) x^{-j}.
\end{gather*}

Using this representation in \eqref{g-def-00} produces
\begin{gather*}
G_{\ell,k,s}(\mu) =\frac{\Gamma(4\ell+3+2s)(4\mu )^{2s}}{\Gamma(2s+1)\Gamma(4\ell+3)\Gamma(-2s)\Gamma(-4\ell-2)}\\
\hphantom{G_{\ell,k,s}(\mu) =}{}\times \sum_{j=0}^{\infty}\phi_j\Gamma(-2s+j)\Gamma(-4\ell-2-2s+j)(4\mu)^{-j}\int_0^{\infty}r^{2+2\ell+k+2s-j}e^{-2\mu r} dr.
\end{gather*}
Evaluating the last integral in terms of the gamma function and simplifying produces a proof of the next result.

\begin{Theorem}
The integral
\begin{gather*}
G_{\ell,k,s}(\mu) = \int_{0}^{\infty} r^{2 + 2 \ell + k} e^{-2 \mu r} L_{2s}^{2( 2 \ell + 1)}( 4 \mu r ) dr
\end{gather*}
is given by
\begin{gather*}
G_{\ell, k, s }(\mu) = \frac{4^{s}}{(2 \mu)^{3 + 2 \ell +k}} \frac{\Gamma( 3 + 2 \ell + k + 2s)}{\Gamma(2s+1)}\,
\pFq21{-2s, \, -2s-4 \ell -2}{ -2 -2 \ell -k - 2s}{\, \frac{1}{2}}.
\end{gather*}
\end{Theorem}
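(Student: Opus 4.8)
The plan is to execute the three manipulations indicated just before the statement. Starting from the displayed expansion of $G_{\ell,k,s}(\mu)$ obtained by inserting the $T_{2}$-representation \eqref{L-case2} into \eqref{g-def-00}, I would first evaluate the remaining elementary integral, then collect the powers of $2$ and $\mu$, and finally rewrite the surviving gamma factors as Pochhammer symbols so that the $j$-sum is recognized as a terminating ${}_{2}F_{1}$ at the argument $\tfrac12$. As always with the method of brackets, the exponents $2s$ and $\alpha = 4\ell+2$ are kept generic while the manipulations are carried out, the physical integer values being recovered at the end by continuity; this is what makes the divergent symbols $\Gamma(-2s)$ and $\Gamma(-4\ell-2)$ legitimate bookkeeping devices that will cancel against matching poles.

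Concretely, applying $\int_{0}^{\infty} r^{a-1}e^{-2\mu r}\,dr = \Gamma(a)(2\mu)^{-a}$ with $a = 3+2\ell+k+2s-j$ to each summand and regrouping, the exponent of $\mu$ in the $j$-th term collapses to the $j$-independent value $-(3+2\ell+k)$, while the powers of $2$ combine to $2^{2s-j} = 4^{s}\,2^{-j}$. Pulling the factor $4^{s}/(2\mu)^{3+2\ell+k}$ out of the sum leaves
\begin{gather*}
G_{\ell,k,s}(\mu) = \frac{4^{s}\,\Gamma(4\ell+3+2s)}{(2\mu)^{3+2\ell+k}\,\Gamma(2s+1)\,\Gamma(4\ell+3)\,\Gamma(-2s)\,\Gamma(-4\ell-2)}\\
\times \sum_{j=0}^{\infty}\phi_{j}\,2^{-j}\,\Gamma(-2s+j)\,\Gamma(-4\ell-2-2s+j)\,\Gamma(3+2\ell+k+2s-j).
\end{gather*}

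Next I would turn each gamma factor in the sum into a Pochhammer symbol: $\Gamma(-2s+j) = \Gamma(-2s)(-2s)_{j}$ and $\Gamma(-4\ell-2-2s+j) = \Gamma(-4\ell-2-2s)(-4\ell-2-2s)_{j}$ from the first relation in \eqref{gamma-1}, and $\Gamma(3+2\ell+k+2s-j) = \Gamma(3+2\ell+k+2s)\,(-1)^{j}/(-2-2\ell-k-2s)_{j}$ from the second relation in \eqref{gamma-1}. The $(-1)^{j}$ produced this way cancels the $(-1)^{j}$ inside $\phi_{j}$, the leftover $2^{-j}/j!$ becomes $(\tfrac12)^{j}/j!$, and the sum equals $\Gamma(-2s)\,\Gamma(-4\ell-2-2s)\,\Gamma(3+2\ell+k+2s)$ times $\pFq21{-2s,\,-2s-4\ell-2}{-2-2\ell-k-2s}{\tfrac12}$. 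For the prefactor, $\Gamma(-2s)$ cancels at once, and combining $\Gamma(4\ell+3+2s) = \Gamma(4\ell+3)(4\ell+3)_{2s}$ with $\Gamma(-4\ell-2-2s) = \Gamma(-4\ell-2)/(4\ell+3)_{2s}$ (again \eqref{gamma-1}, using that $s$ is an integer so $2s$ is even) makes everything collapse to $\Gamma(3+2\ell+k+2s)/\Gamma(2s+1)$, which is exactly the asserted closed form.

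The step I expect to be the real obstacle is the last one: keeping track of which of the singular symbols $\Gamma(-2s)$ and $\Gamma(-4\ell-2)$ cancels which of the genuinely singular factors $\Gamma(-2s+j)$ and $\Gamma(-4\ell-2-2s+j)$. The safe route is not to work at the integer values at all, but to treat $2s$ and $4\ell+2$ as complex parameters, perform all the cancellations as identities of meromorphic functions, and only afterwards specialize to the integers. It is also worth recording that no convergence issue arises: the first numerator parameter $-2s$ of the resulting ${}_{2}F_{1}$ is a nonpositive integer, so the series terminates, and the denominator Pochhammer $(-2-2\ell-k-2s)_{j}$ is nonzero for $0\le j\le 2s$ precisely because convergence of the integral \eqref{g-def-00} forces $2\ell+k+3>0$.
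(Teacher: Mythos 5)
Your proposal is correct and follows exactly the route the paper takes: insert the $T_{2}$ representation into \eqref{g-def-00}, evaluate the elementary integral, and simplify. The paper leaves the final "simplifying" step implicit, and your algebra (the $j$-independent collapse of the $\mu$-exponent, the factor $4^{s}2^{-j}$, the Pochhammer conversions via \eqref{gamma-1}, and the cancellation $\Gamma(4\ell+3+2s)\Gamma(-4\ell-2-2s)=\Gamma(4\ell+3)\Gamma(-4\ell-2)$ using $(-1)^{2s}=1$) fills it in correctly.
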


\section{A couple of examples}\label{sec-examples}

The method of brackets has been used here to produce analytic expressions for the mean radius
\begin{gather*}
\big \langle r^{k} \big \rangle_{n \ell} =( 2 \mu)^{2 \ell} A_{n \ell}^{2} \int_{0}^{\infty} r^{2 + 2 \ell +k}
e^{-2 \mu r} \big[ L_{n- \ell -1}^{2 \ell +1}(2 \mu r) \big]^{2} dr,
\end{gather*}
stated f\/irst in \eqref{mean-1}. The physically relevant parameters are
\begin{gather*}
n = 0, 1, 2, \dots, \qquad 0 \leq \ell \leq n-1,\qquad k \in \mathbb{R}.
\end{gather*}

The expressions include
\begin{gather}
\big\langle r^{k} \big\rangle_{n \ell} =\frac{\Gamma(2 \ell + k + 3) (2n+2 \ell)! }{n 2^{2n-2\ell-1} (4 \ell +2)! (2 \mu)^{k} (n+ \ell)! (n-\ell -1)!}\nonumber \\
\hphantom{\big\langle r^{k} \big\rangle_{n \ell} =}{} \times\sum_{s=0}^{n- \ell-1}
\frac{\binom{n + \ell}{s} \binom{n-\ell-1}{s}}{\binom{2n+2 \ell}{2s}}\,
\pFq21{{-2(n-\ell-1-s)},\, {2 \ell + k +3}}{4 \ell + 3}{\,2},\label{formula1a-for-r}
\end{gather}
where $\langle r^{k} \rangle_{n \ell}$ is given as a f\/inite sum of hypergeometric terms and
\begin{gather*}
\big\langle r^{k} \big\rangle_{n \ell} =\frac{\Gamma(2 \ell + k + 3) \Gamma(n - \ell -k - 2)}
{2n (2\mu)^{k} \Gamma(n - \ell) \Gamma(2 \ell + 2) \Gamma(-1-k)} \,
\pFq32{{k+2}, \,{1+\ell-n}, \, {2 \ell + k + 3}} {{2 \ell + 2}, \, {l+k+3 -n}}{\,1}
\end{gather*}
given in Theorem \ref{f3}. This section compares these expressions with the results found in the literature.

\begin{Example}
Take $\ell = n - 1$. Then the sum \eqref{formula1a-for-r} reduces to $1$ since the index $s$ must vanish. Then
\begin{gather*}
\big\langle r^{k} \big\rangle_{n,n-1} =\frac{\Gamma(k+2n+1)}{(2 \mu)^{k} (2n)!}.
\end{gather*}
In particular, for $k \in \mathbb{N}$, this becomes
\begin{gather*}
\big\langle r^{k} \big\rangle_{n,n-1} =\frac{(2n+k)!}{(2 \mu)^{k}(2n)!}.
\end{gather*}
\end{Example}

\begin{Example}
The case $\ell = n-2$ reduces the sum \eqref{formula1a-for-r} to two terms. The result is
\begin{gather*}
\langle r^{k} \rangle_{n,n-2} = \frac{(k^{2}+3k+2n)\Gamma(k+2n-1)}{2 (2 \mu)^{k} (2n-2)!}.
\end{gather*}
\end{Example}

\subsection*{Acknowledgments} The work of I.K.\ was supported in part by Fondecyt (Chile) Grants Nos.\ 1040368, 1050512 and 1121030, by DIUBB (Chile) Grant Nos. 102609, GI 153209/C and GI 152606/VC. V.H.M.\ acknowledges the partial support of NSF-DMS 1112656.

\pdfbookmark[1]{References}{ref}
\LastPageEnding

\end{document}